 \newcommand{\ba}{\begin{eqnarray}}
\newcommand{\ea}{\end{eqnarray}}
\newcommand{\ketbra}[2]{|#1\rangle \langle #2|}
\newtheorem{theorem}{Theorem}
\newtheorem{Lemma}{Lemma}
\newcommand{\sbg}{\textcolor{red}}
\begin{document}

\title{ Minimum Detection Efficiencies for Loophole-free Genuine Nonlocality Tests  }

\author{Subhendu B. Ghosh}
\affiliation{Physics and Applied Mathematics Unit, 203 B.T. Road Indian Statistical Institute Kolkata, 700108}

\author{Snehasish Roy Chowdhury}
\affiliation{Physics and Applied Mathematics Unit, 203 B.T. Road Indian Statistical Institute Kolkata, 700108}

\author{Ranendu Adhikary}
\affiliation{Electronics and Communication Sciences Unit, 203 B.T. Road Indian Statistical Institute Kolkata, 700108 }

\author{Arup Roy}
\affiliation{Department of Physics, A B N Seal College Cooch Behar, West Bengal 736101, India}

\author{Tamal Guha}
\affiliation{QICI Quantum Information and Computation Initiative, Department of Computer Science, The University of Hong Kong, Pokfulam Road, Hong Kong}

\begin{abstract}
The certification of quantum nonlocality, which has immense significance in architecting device-independent technologies, confronts severe experimental challenges. Detection loophole, originating from the unavailability of perfect detectors, is one of the major issues amongst them. In the present study we focus on the minimum detection efficiency (MDE) required to detect various forms of genuine nonlocality, originating from the type of causal constraints imposed on the involved parties. 
In this context, we demonstrate that the MDE needed to manifest the recently suggested $T_2$-type nonlocality deviates significantly from perfection. Additionally, we have computed the MDE necessary to manifest Svetlichny's nonlocality, with state-independent approach  markedly reducing the previously established bound. Finally, considering the inevitable existence of noise we demonstrate the robustness of the imperfect detectors to certify $T_2$-type nonlocality.

\end{abstract}

\maketitle
{\it Introduction.--}
Over the past thirty years, a distinctive category of information  theory has emerged popularly known as Device-Independent (DI) technology, which surpasses several limitations of classical physics \cite{Ekert1991,Scarani2012, Barrett2005a, Acin2006, Pironio2010, Colbeck2012, Chaturvedi2015, Mukherjee2015, Brunner2013, Pappa2015, Buhrman2010, Roy2016, Banik2019}. 
  Admittedly, the foundational constituent behind such an ever expanding architecture was implanted in the seminal results of John S. Bell \cite{Bell1964,Bell1966}, which derives a testable criteria for certifying the nonlocal aspect of any operational theory. In a nutshell, the nonlocality is a signature of multipartite correlations, incompatible with the classical outlook to the causal constraints imposed on the involved parties. For example, the pioneering nonlocality test as proposed by Clauser-Horne-Shimony-Holt (CHSH) considers a bipartite scenario, where spatially separated agents, supplemented with some noncommunicating correlation, are asked to produce a dichotomic random variable depending upon their individual classical inputs \cite{clauser1969}. Finally, an inequality based on their local outputs is so designed that the violation of the same led to an apparent contradiction with their imposed causal structure, which can further be removed by relaxing the local-realistic description of their shared correlation. This in turn certifies the nonlocal signature of that correlation. However in the multipartite scenario, there are various causal constraints that can be imposed on the spatially separated parties and accordingly with an increasing number of parties involved, the complexity associated with their local input-output statistics also grows. In the most basic multipartite scenario, three spatially separated parties are asked to generate local random variables depending upon the input given to them locally. On the other hand, when there is no causal constraint imposed on two of the parties (personified as Bob and Charlie), then their local input-output statistics should be factorized between Alice vs. Bob-Charlie when their measurements are space like separated from that of Alice. While any of their obtained statistics compatible with such a scenario is termed as bi-local (BL) correlations, an apparent contradiction of the same can be characterized by an inequality proposed by Svetlichny \cite{Svetlichny1987} and coined as genuinely nonlocal. In a more complicated scenario, one may further consider a temporal ordering in the input-output generation of the parties Bob and Charlie. Intuitively, this allows Charlie to make a redundant signalling to Bob, when he is in causal future of Bob and vice-versa. It is instructive that such a causal constraint is stronger than that of the earlier. Accordingly the set of correlations compatible with these constraints, namely the time-ordered bi-local (TOBL) correlations is strictly included inside the set of BL correlations \cite{Navascués2012}. In isolation, the characterization of BL correlations has emerged over the past decade when an operational inconsistency of the Svetlichny's definition of bi-locality is reported and it has further fuelled the development of various refined causal structures on genuine nonlocality \cite{Navascués2012,Pironio2013,Duttapra2020}.

Interestingly, quantum theory exhibits the signature of nonlocality in its simplest possible bipartite scenario. However, in the recent era of many body physics the importance of multipartite quantum correlations such as entanglement and nonlocality is needless to over-emphasize. Notably, the genuine nonlocality has garnered significant attention, primarily owing to its implications in Device-Independent Random Number Generation (DIRNG), Device-Independent Quantum Key Distribution (DIQKD), and Device-Independent certification of genuine entanglement \cite{Bancalprl2011,Chenprl2014,Acinquantum2018,Brussprr2020, Bancalbook2014, Liangprl2014, Aolitaprl2012}. However, the experimental certification of the potential nonlocal signature of quantum correlations encounter serious challenges by the possible loopholes in Bell-test. One such major operational loophole is the detection loophole, which emerges due to an inefficient photon detector, used to record the local input-output statistics obtained by each of the spatially separated agents. In particular, there are possible instances where the imperfect detector may yield no click, i.e., an inclusive outcome. While one might consider overcoming this issue by simply discarding the no-click outcomes, such post-selection requires an additional assumption, specifically the fair-sampling assumption. Without this assumption, even a local hidden variable model can violate Bell inequalities \cite{Perale1970}. A more comprehensive and appropriate approach to addressing this challenge is to regard the no-click event as a potential valid outcome achievable during the Bell test. The second approach needs a cut-off on the detection efficiency to observe nonlocality. However, given an entangled state, accompanied with a set of observable may require further bound on the minimum detection efficiency. For instance, Bell test involving two qubit maximally entangled state demands 83${\%}$ detection efficiency, while non maximally entangled state needs 67${\%}$ \cite{Garg1987, Larssson1998, Eberhard1993}. Moreover, motivated by experimental set-up, a minimum detector efficiency for the asymmetric bipartite Bell-test has also been reported \cite{Larson2007,Brunner2007prl}. 
 The study of such bipartite nonlocality tests under inefficient detectors has gained further importance by developing a series of sophisticated experimental set-up  \cite{Monroe2004,Hanson2015,Zeilinger2015,Shalm15, Wallraff2023}. In spite of having several foundational and practical importance, in the multipartite scenario the nonlocality test with inefficient detectors is mostly restricted to its simplest form \cite{Semitecolos2001,Villanueva2008}, with a recent addition for Svetlichny-type genuine nonlocality test \cite{Hong2012, Smerzi2012pra}. 
  
  On the other hand, the resource theoretic construction of quantum nonlocality contradicts the existence of an unique nonlocal measure, even in its simplest scenario \cite{Banik2023}. This further motivates to investigate possible detection efficiencies for other possible tests for genuine nonlocality by imposing different causal-constraints on the involved parties \cite{Navascués2012,Pironio2013,Duttapra2020}. In this work, we consider various possibilities of nonlocality tests and derive the minimal possible sophistication one may require for the detectors to certify genuine nonlocality in the proposed scenario. With the help of a state-independent generalized study, we have been able to significantly lower the already existing bound on detection efficiency, for the Svetlichny-type nonlocality test. Moreover, our results encapsulates an experimentally motivated scenario for hybrid entanglement, where different quantum particles are allowed to be entangled via inter-particle interactions \cite{Jeong2014, Wen2021, Feist2022, He2022}. Finally, a brief analysis of noise robustness for the inefficient detectors have been reported. This, in turn, hints towards a complimentary relation between the detection efficiency and the possible range of one-parameter quantum settings (i.e., the state along with the triplets of measurement pairs) to exhibit genuine nonlocality, in the presence of an unavoidable noise.

\textit{Genuine nonlocality under various causal constraints.}-- Consider the simplest multipartite scenario, where three spatially separated agents perform local operations on their individual subsystems. Further, their space-time relation is restricted in such a way that no more than two of them can reside in a same light cone. As a consequence, any local-realistic hidden variable description simulating their local input-output statistics must be factorized in either of the three possible bi-partitions. This in essence reads,
\begin{align}\label{svetlichnylocal}
&P(abc|XYZ) = \sum_\lambda q_{\lambda}\,P_\lambda(ab|XY)\,P_\lambda(c|Z) + \nonumber\\
& \sum_\mu q_{\mu}\,P_\mu(ac|XZ)\,P_\mu(b|Y) + \sum_\nu q_{\nu}\,P_\nu(bc|YZ)\,P_\nu(a|X),
\end{align}
where, $0\leq q_{\lambda},q_{\mu},q_{\nu} \leq 1$ are the respective probabilities compatible with the causal constraint and $\sum_\lambda q_{\lambda}+\sum_\mu q_{\mu}+\sum_\nu q_{\nu}=1$. In 1987, Svetlichny pointed out that all possible correlations that can be decomposed in the form of Eq.(\ref{svetlichnylocal}), respects the following inequality over their operational statistics \cite{Svetlichny1987}:
\begin{align}\label{ineqSI}
    -\langle A_{0}B_{0}C_{0}\rangle-\langle A_{1}B_{0}C_{0}\rangle+\langle A_{0}B_{1}C_{0}\rangle-\langle A_{1}B_{1}C_{0}\rangle \nonumber\\ 
    -\langle A_{0}B_{0}C_{1}\rangle+\langle A_{1}B_{0}C_{1}\rangle-\langle A_{0}B_{1}C_{1}\rangle-\langle A_{1}B_{1}C_{1}\rangle \le 4, 
\end{align}
where $\langle A_xB_yC_z\rangle$ for every possible $ x,y,z\in\{0,1\}$ denotes the expectation value of the two-outcome $\{\pm 1\}$ measurements performed by the respective parties on their local constituents. In contrast, any correlation violating the above inequality (\ref{ineqSI}) can be characterized as Svetlichny-type genuine nonlocal.

It is instructive that for the above scenario the parties, sharing a common light cone (say Bob and Charlie), are free to communicate with each other. However, such a relaxed causal constraint allows them to establish a bipartite nonlocality with Alice. With this operational inconsistency, one may further impose a definite causal order between input-output generation of Bob and Charlie, which, in effect, allows only a single direction of relevant communication between them \cite{Navascués2012,Pironio2013}. Any correlation compatible with such a constraint is referred to be $T_{2}$ local and admits the following decomposition:
\begin{align}\nonumber\label{TO} &P(abc|xyz) = \sum_{\lambda}q_{\lambda}P_{\lambda}^{T_{AB}}(ab|xy) P_\lambda (c|z) + \nonumber\\&  \sum_{\mu}q_{\mu}P_{\mu}^{T_{AC}}(ac|xz) P_\mu (b|y)   +\sum_{\nu} q_{\nu}P_{\nu}^{T_{BC}}(bc|yz) P_\nu (a|x) \end{align}
where, the joint term $p_{\lambda}^{T_{AB}}(ab|xy)=p_{\lambda}^{A<B}(ab|xy) $  if the local statistics obtained by Alice is in causal past of Bob and $p_{\lambda}^{T_{AB}}(ab|xy)=p_{\lambda}^{B<A}(ab|xy) $ when the order is reversed. All the other bipartite marginals can also be explained in similar fashion. Notably, in both the cases there are (at most) 1-way signaling. This refinement seem to offer a potential solution to sidestep the earlier mentioned issues that might arise within the Svetlichny scenario. Now if a probability distribution $p(abc|xyz)$ can be written in the form \eqref{TO} then it is said to be $T_{2}$ local otherwise it is called genuinely three-way nonlocal. In \cite{Pironio2013}, the authors proposed an experimentally testable criteria to verify the compatibility of a tripartite correlation with such a time-ordered space-time structure:
{\small
  \setlength{\abovedisplayskip}{5pt}
  \setlength{\belowdisplayskip}{\abovedisplayskip}
  \setlength{\abovedisplayshortskip}{0pt}
  \setlength{\belowdisplayshortskip}{10pt}

\begin{align} 
\mathcal{B}_{T_2} :=& -2\{P(00|A_1B_1) + P(00|B_1C_1)+P(00|A_1C_1)\}&& \nonumber \\
& - P(000|A_0B_0C_1) - P(000|A_0B_1C_0)&&\nonumber\\
& - P(000|A_1B_0C_0)+ 2P(000|A_1B_1C_0)&& \nonumber  \\ 
&+ 2P(000|A_1B_0C_1) + 2P(000|A_0B_1C_1)&& \nonumber  \\
& + 2P(000|A_1B_1C_1) \leq 0,&&  \label{ineq}
\end{align}
}
where, $A_x,B_y,C_z$ are the dichotomic local observable with possible outcomes $\{0,1\}$. A violation of the inequality (\ref{ineq}) certifies the genuine nonlocality possessed by the correlation.


 {\it Minimum detection efficiency (MDE)--} The requirement of a loophole-free experimental test for nonlocality can be classified into two different aspects: Firstly, from the information theoretic perspective, it is required to obtain a nonlocal statistics out of a black box promising to generate a particular form of correlation or state. On the other hand, from the foundational perspective, one may eager to certify the nonlocal signature of an operational theory in a device-independent manner. Notably, while the first one conceives a state-dependent outlook, the latter one asks about the nature of the theory regardless of any possible preparations. These two different perspectives demand different forms of experimental sophistication, one of which is the efficiency of the detectors. However, in the following, we will see how they are connected via an optimization.\par
 Let us, consider one such Bell-inequality $\mathcal{B}$ for nonlocality test, which is supposed to be verified on an $N$-partite entangled state $\rho$ with $i^{\text{th}}$ party performing $m_i$ numbers of incompatible measurements $\mathcal{M}_i:=\{M_1,\cdots, M_{m_i}\}$. Then the detection efficiency required for such a test is said to be the cut-off detection efficiency (CDE) and is denoted as
 \begin{align}\label{cutoff}
\eta^\mathcal{B} := \eta^\mathcal{B}(\rho,\{\mathcal{M}_k\}_{k=1}^{N}).   
\end{align}
However, the minimum detection efficiency (MDE) for the same Bell-test $\mathcal{B}$ can be identified as, 
\begin{align}\label{eq1}
    \eta^\mathcal{B}_{\min} = \underset{\rho,\{\mathcal{M}_k\}_{k=1}^{N}}{\text{inf}}\eta^\mathcal{B}(\rho,\{\mathcal{M}_k\}_{k=1}^{N}),    
\end{align}
where the infimum is over the set of all possible $N$-partite entangled density matrices acting on the joint Hilbert space $\mathcal{H}$ and $\{\mathcal{M}_k\}_{k=1}^N$ are the set of all possible $m_k$ incompatible measurements for the $k^{\text{th}}$-cite. Notably, CDE is a quantity of interest for the self-testing scenario, while MDE can be assigned with the device-independent certification of nonlocality in quantum theory.

Generally, the CDE is a very complicated nonlinear function of both the states and measurements. Hence, the optimization involved in (\ref{eq1}) is highly nontrivial and the characterization of MDE for an arbitrary Bell-test becomes very challenging in general. However, motivated by the device-independent architectures and their implications, in the present work we derive the MDE for detection of various tripartite genuine nonlocality. In particular, for the $T_2$-type nonlocality test we estimate $\eta_{\min}$ bypassing the optimization complexities related to CDE, while for the Svetlichny-type test, we adopt a rigorous numerical optimization.

Furthermore, in a real-life experimental setup, things become more intricate due to the inevitable existence of noise. 
If, by $\{\mathbf{p}\}$ one denotes the set of parameters defining the noise, eq.(\ref{eq1}) is modified as:
\begin{equation}\label{eq2}
    \eta^\mathcal{B}_{\min}(\{\mathbf{p}\}) = \underset{\rho,\{\mathcal{M}_k\}_{k=1}^{N}}{\text{inf}}\eta^\mathcal{B}(\rho,\{\mathcal{M}_k\}_{k=1}^{N},\{\mathbf{p}\}),    
\end{equation}
Clearly, in the laboratory the efficiencies of the detectors must follow the relation: $\eta > \eta^\mathcal{B}_{\min}(\{\mathbf{p}\})$.\par
\textit{MDE for genuine nonlocality.}-- Let us begin with a stronger causal constraint where the parties residing in the same light cone are bounded with a definite causal order to generate their local input-output statistics. As mentioned earlier, the incompatibility of a correlation in such a scenario can be certified by the violation of the inequality (\ref{ineq}). To establish such a genuine three-way nonlocal signature of shared quantum correlation, we consider a situation where the associated parties will perform the local measurements with inefficient detectors with corresponding efficiencies $\eta_A,~\eta_B\text{ and }\eta_C$. 
The following lemma derives the minimum requirement for detection efficiencies to violate (\ref{ineq}).  
\begin{Lemma} \label{lem1}
The spatially separated parties would be able to certify genuine three-way nonlocality in terms of the inequality $\mathcal{B}_{T_2}$, when the following condition holds
\begin{align*}
    ( 4\eta_{A}\eta_{B}\eta_{C}-\eta_{A}\eta_{B}-\eta_{A}\eta_{C}-\eta_{B}\eta_{C})>0.    
\end{align*}

\end{Lemma}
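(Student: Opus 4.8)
The plan is to exhibit an explicit quantum realization together with a deterministic no-click assignment for which the detection-modified value of $\mathcal{B}_{T_2}$ stays positive precisely when the stated inequality holds. First I would fix how the no-detection (``no-click'') event is treated: I assign every no-click outcome to the deterministic value ``$1$''. Since every term appearing in $\mathcal{B}_{T_2}$ in Eq.(\ref{ineq}) is a probability of the outcome ``$0$'' on some subset of parties (either a pairwise $P(00|\cdot)$ or a tripartite $P(000|\cdot)$), a no-click relabelled to ``$1$'' never contributes to any term of $\mathcal{B}_{T_2}$. This keeps the statistics a genuine two-outcome correlation, so the $T_2$-local bound $\mathcal{B}_{T_2}\le 0$ coming from the decomposition Eq.(\ref{TO}) remains a valid certification criterion, and it lets me read off the observed probabilities immediately: a tripartite term is scaled by $\eta_A\eta_B\eta_C$ (all three detectors must fire and return $0$) and a pairwise term by the product of the two relevant efficiencies.

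Substituting these scalings into Eq.(\ref{ineq}) I would obtain
\begin{align}
\mathcal{B}_{T_2}^{\eta} ={}& \eta_A\eta_B\eta_C\,\mathcal{T} \nonumber\\
& -2\big[\eta_A\eta_B\,P(00|A_1B_1)+\eta_B\eta_C\,P(00|B_1C_1) \nonumber\\
& \quad\, +\eta_A\eta_C\,P(00|A_1C_1)\big],
\end{align}
where $\mathcal{T}$ collects the seven tripartite $P(000|\cdot)$ terms with their original coefficients $(-1,-1,-1,+2,+2,+2,+2)$ and the $P(00|\cdot)$ are the ideal (detector-free) marginals. The whole problem then reduces to choosing the state and measurements so that $\mathcal{T}$ is large while the three pairwise terms stay small.

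Next I would pick the optimizing correlation. A GHZ state with suitable dichotomic settings yields the perfect statistics in which the three ``penalty'' terms $P(000|A_0B_0C_1),P(000|A_0B_1C_0),P(000|A_1B_0C_0)$ vanish, the four ``reward'' terms $P(000|A_1B_1C_0),P(000|A_1B_0C_1),P(000|A_0B_1C_1),P(000|A_1B_1C_1)$ all equal $\tfrac12$, and $P(00|A_1B_1)=P(00|B_1C_1)=P(00|A_1C_1)=\tfrac12$. Here I would invoke the no-signaling constraint $P(000|A_1B_1C_z)\le P(00|A_1B_1)$ and its permutations to argue that each reward term cannot exceed the associated pairwise term, so this GHZ point is extremal: it makes $\mathcal{T}=4$ while forcing the pairwise terms down to the smallest value compatible with the rewards. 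Plugging in gives $\mathcal{B}_{T_2}^{\eta}=4\eta_A\eta_B\eta_C-\eta_A\eta_B-\eta_B\eta_C-\eta_A\eta_C$, whence $\mathcal{B}_{T_2}^{\eta}>0$ is equivalent to the claimed condition; the same condition survives under the rescaling $\tfrac12\mapsto r$ of the common value, since an overall positive factor $2r$ then cancels out of the sign.

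The main obstacle I anticipate is not the algebra but the two structural claims underlying the optimizing step: (i) writing down an explicit GHZ state and three observable pairs that simultaneously null the penalty correlators and saturate the reward correlators at $\tfrac12$ (a constrained construction, since the marginals of a single shared state are not independent), and (ii) justifying that the ``$1$'' no-click assignment with this correlation is genuinely the efficiency-optimal strategy, so that the derived threshold is the minimal one and not merely sufficient. For the sufficiency asserted in the lemma, exhibiting the GHZ realization and the assignment suffices; the optimality/necessity claim would additionally require bounding $\mathcal{T}$ against the pairwise terms for every state and assignment, which is exactly where the no-signaling inequalities $P(000|\cdot)\le P(00|\cdot)$ would again do the work.
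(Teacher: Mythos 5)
Your detector-scaling step is correct and matches the paper's convention: mapping every no-click to the outcome ``$1$'' makes each term of $\mathcal{B}_{T_2}$ pick up the product of the efficiencies of the parties involved, giving the observed value $\eta_A\eta_B\eta_C\,\mathcal{T}-2\big[\eta_A\eta_B P(00|A_1B_1)+\eta_B\eta_C P(00|B_1C_1)+\eta_A\eta_C P(00|A_1C_1)\big]$. The proof then collapses, however, because the correlation on which everything rests does not exist. The ideal statistics you attribute to a GHZ point --- the three penalty terms equal to $0$, the four reward terms equal to $\tfrac12$, and the three pairwise terms equal to $\tfrac12$ --- violate no-signaling, so no quantum state (indeed no no-signaling box) can produce them. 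Concretely: from $P(000|A_1B_0C_1)=\tfrac12$ and $P(000|A_1B_1C_0)=\tfrac12$ the marginals satisfy $P(00|A_1B_0)\ge\tfrac12$ and $P(00|A_1C_0)\ge\tfrac12$; the penalty $P(000|A_1B_0C_0)=0$ says the events $\{a=0,b=0\}$ and $\{a=0,c=0\}$ are disjoint inside $\{a=0\}$ under the setting $(A_1,B_0,C_0)$, forcing $P(0|A_1)=1$. The symmetric arguments with the penalties $P(000|A_0B_1C_0)=0$ and $P(000|A_0B_0C_1)=0$ give $P(0|B_1)=1$ and $P(0|C_1)=1$. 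But then $P(00|A_1B_1)\ge P(0|A_1)+P(0|B_1)-1=1$, contradicting your claim $P(00|A_1B_1)=\tfrac12$. This is precisely the ``constrained construction'' you flagged as obstacle (i), and it is fatal rather than technical: the algebraic value $\mathcal{B}_{T_2}=1$ your point would attain is unreachable. The rescaling $\tfrac12\mapsto r$ does not repair this on its own, since you would still need a quantum realization in which the reward terms \emph{exactly} equal the pairwise terms with a strictly positive common value, and you exhibit none.

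There is also a mismatch of direction with the paper. The paper's proof of Lemma \ref{lem1} is the \emph{necessity} claim: using only the marginalization bounds $P'(000|\cdot)\le P'(00|\cdot)$ (the same no-signaling facts you invoke for ``extremality''), it shows the observed value is at most $(4\eta_A\eta_B\eta_C-\eta_A\eta_B-\eta_A\eta_C-\eta_B\eta_C)\min_X P'(00|X)$, so any observed violation forces the bracket to be positive --- no construction is needed at all. The sufficiency you are aiming at is the paper's Lemma \ref{lem2}, and there the paper does not use a single extremal point but the one-parameter family (\ref{state}), whose statistics have vanishing penalties, reward terms $k_\theta^2\sin^4\theta$, and pairwise terms $k_\theta^2\sin^4\theta\,(1+\tan^2\tfrac{\theta}{2})$: the pairwise-to-reward ratio is strictly greater than $1$ but tends to $1$ as $\theta\to0$, so for every efficiency triple \emph{strictly} satisfying the condition some $\theta$ yields a violation. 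That limiting family is exactly what replaces your nonexistent GHZ point; any salvage of your route must likewise produce realizable statistics that approach, rather than attain, the ideal ratio.
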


\begin{proof}
   From now onward, we will use $P^{\prime}(\mathcal{O}|\mathcal{X})$ and $P\left(\mathcal{O}|\mathcal{X}\right)$ for theoretical and observed probability respectively, when input variable $\mathcal{X}$ produces output variable $\mathcal{O}$. Assuming the independence of each of the individual local detectors the observed probability for each individual can only depend upon the corresponding theoretical probability and the detection efficiency of the local detector. Now, if all the parties agrees to assume $\mathcal{O}\equiv 1$ for the no-click instances, then $P(\mathcal{O}|{\mathcal{X}}) = \eta P^{\prime}\left(\mathcal{O}|\mathcal{X}\right)$, only when $\mathcal{O}\equiv 0$. This readily implies,
{\small\begin{align}
P(000|A_1B_1C_1)&=\eta_{A}\eta_{B}\eta_{C}P^{\prime}(000|A_1B_1C_1)&&\nonumber\\ &\leq\eta_{A}\eta_{B}\eta_{C}~~ \underset{X}{\min} ~ P^{\prime} (00|X)&&\nonumber
\end{align}}
where, $X\in\{A_1B_1,B_1C_1,A_1C_1\}$ .
{\small\begin{align}
\text{Again, }
&P(000|A_1B_1C_0)-P(00|A_1B_1)&&\nonumber\\
&= \eta_{A}\eta_{B}\eta_{C}P^{\prime}(000|A_1B_1C_0)-\eta_{A}\eta_{B}P^{\prime}(00|A_1B_1)&&\nonumber\\
&\leq\eta_{A}\eta_{B}\eta_{C}\underset{X}{\min}P^{\prime} (00|X) -\eta_{A}\eta_{B}\underset{X}{\min}P^{\prime}(00|X)&&\nonumber 
\end{align}}
where the last inequality holds since $P(000|A_1B_1C_0)-P(00|A_1B_1)\leq 0$ and $\eta_{C}, \eta_{A}, \eta_{B}\le1$. 
Deriving the similar inequalities for other pair of terms $P(000|A_1B_0C_1)-P(00|A_1C_1)$ and $P(000|A_0B_1C_1)-P(00|B_1C_1)$, we can rewrite the violation of $\mathcal{B}_{T_2}$ (\ref{ineq}) as
{\small\begin{align}
&( 4\eta_{A}\eta_{B}\eta_{C}-\eta_{A}\eta_{B}-\eta_{A}\eta_{C}-\eta_{B}\eta_{C})~~ \underset{X}{\min} ~ P^{\prime} (00|X)>0\nonumber
\\&\Rightarrow( 4\eta_{A}\eta_{B}\eta_{C}-\eta_{A}\eta_{B}-\eta_{A}\eta_{C}-\eta_{B}\eta_{C})>0\label{necT2}
\end{align}}
Hence, the above condition among the detector efficiencies, becomes necessary to exhibit genuine three-way nonlocality.
\end{proof}
At this point, a pertinent question naturally arises regarding the sufficiency of the inequality (\ref{necT2}). In other words, whether every triplet of $\{\eta_A,\eta_B,\eta_C\}$, satisfying $ 4\eta_{A}\eta_{B}\eta_{C}-\eta_{A}\eta_{B}-\eta_{A}\eta_{C}-\eta_{B}\eta_{C}>0$, corresponds to CDE for $\mathcal{B}_{T_2}$. In the following, we will answer this question affirmatively.

\begin{Lemma} \label{lem2}
There exists a quantum settings with a single parameter $\theta$, which violates the inequality (\ref{ineq}), namely $\mathcal{B}_{T_2}$, whenever $( 4\eta_{A}\eta_{B}\eta_{C}-\eta_{A}\eta_{B}-\eta_{A}\eta_{C}-\eta_{B}\eta_{C})>0$.
\end{Lemma}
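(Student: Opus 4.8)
The plan is to settle the sufficiency question raised immediately after Lemma~\ref{lem1} by an explicit construction that \emph{saturates} the bound produced while proving that lemma. Tracking the chain of inequalities there, one sees that for an arbitrary quantum setting $\mathcal{B}_{T_2}\le 2m\,(4\eta_A\eta_B\eta_C-\eta_A\eta_B-\eta_A\eta_C-\eta_B\eta_C)$, where $m:=\min_X P^{\prime}(00|X)$ with $X\in\{A_1B_1,B_1C_1,A_1C_1\}$. Hence it is enough to produce a one-parameter family of states and measurements for which this inequality is met with equality for every triple $(\eta_A,\eta_B,\eta_C)$ and with $m>0$; then $\mathcal{B}_{T_2}=2m\,(4\eta_A\eta_B\eta_C-\eta_A\eta_B-\eta_A\eta_C-\eta_B\eta_C)$ is strictly positive on exactly the region claimed in \eqref{necT2}.

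First I would read the equality conditions off the proof of Lemma~\ref{lem1}. Discarding the three terms $P(000|A_0B_0C_1),P(000|A_0B_1C_0),P(000|A_1B_0C_0)$ is lossless iff the corresponding theoretical probabilities vanish, while each pairing estimate $\eta_i\eta_j(\eta_k q-p)\le \eta_i\eta_j(\eta_k-1)m$ becomes an equality for \emph{all} efficiencies only if $p=q=m$. Translating these requirements, the construction must enforce
\begin{align}
&P^{\prime}(00|A_1B_1)=P^{\prime}(00|B_1C_1)=P^{\prime}(00|A_1C_1)=m,\nonumber\\
&P^{\prime}(000|A_1B_1C_1)=P^{\prime}(000|A_1B_1C_0)=P^{\prime}(000|A_1B_0C_1)\nonumber\\
&\qquad=P^{\prime}(000|A_0B_1C_1)=m,\nonumber\\
&P^{\prime}(000|A_0B_0C_1)=P^{\prime}(000|A_0B_1C_0)=P^{\prime}(000|A_1B_0C_0)=0,\nonumber
\end{align}
for a common $m>0$ (the equalities $P^{\prime}(000|A_1B_1C_0)=P^{\prime}(00|A_1B_1)$, etc., additionally force the ``leakage'' terms such as $P^{\prime}(001|A_1B_1C_0)$ to vanish). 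The invariance of $\mathcal{B}_{T_2}$ under permutations of $(A,B,C)$ strongly suggests a symmetric ansatz, which merges the three two-body conditions into one and greatly reduces the number of unknowns.

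Concretely, I would take a generalized-GHZ state $|\psi(\theta)\rangle=\cos\theta\,|000\rangle+\sin\theta\,|111\rangle$ and the \emph{same} pair of dichotomic qubit measurements $\{A_0,A_1\}$ on every site, with the Bloch angles pinned as functions of $\theta$. I would then write the eight ideal probabilities entering $\mathcal{B}_{T_2}$ in closed form, impose the vanishing of $P^{\prime}(000|A_0B_0C_1)$ (and its permutations) together with the coincidence of the seven surviving probabilities at a common value $m(\theta)$, and solve the resulting consistency equations for the measurement angles. Re-attaching the efficiency factor $\eta_i$ to every ``$0$'' outcome of site $i$ then reproduces $\mathcal{B}_{T_2}=2m(\theta)\,(4\eta_A\eta_B\eta_C-\eta_A\eta_B-\eta_A\eta_C-\eta_B\eta_C)$, which proves the lemma as soon as $m(\theta)>0$.

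The principal obstacle is the quantum-realizability step: demonstrating that honest qubit projectors acting on a normalized state can \emph{simultaneously} annihilate the three ``single-$1$'' triple terms and equalize the remaining seven probabilities at a strictly positive value. Since the annihilation and the equalization constraints push the measurement directions in competing ways, I expect a solution to exist only on a sub-interval of $\theta$, so the genuine work lies in exhibiting that interval, checking $m(\theta)>0$ there, and confirming that $A_1B_1,B_1C_1,A_1C_1$ actually realize the minimum in $m=\min_X P^{\prime}(00|X)$---which is what guarantees that the Lemma~\ref{lem1} bound is truly saturated rather than merely obeyed.
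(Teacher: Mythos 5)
Your extraction of the Lemma~\ref{lem1} bound, $\mathcal{B}_{T_2}\le 2m\,(4\eta_A\eta_B\eta_C-\eta_A\eta_B-\eta_A\eta_C-\eta_B\eta_C)$ with $m=\min_X P^{\prime}(00|X)$, is correct, and so is your list of equality conditions. The gap is that those conditions are not quantum-realizable with $m>0$, so the saturating family your plan hunts for does not exist. Indeed, $P^{\prime}(000|A_1B_1C_0)=P^{\prime}(00|A_1B_1)=m$ and $P^{\prime}(000|A_1B_1C_1)=m$ force $P^{\prime}(001|A_1B_1C_0)=P^{\prime}(001|A_1B_1C_1)=0$; hence, conditioned on Alice and Bob both obtaining $0$ under $A_1,B_1$ (an event of probability $m>0$), Charlie's conditional state must yield outcome $0$ with certainty under \emph{both} $C_0$ and $C_1$. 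For nontrivial dichotomic qubit measurements the outcome-$0$ projectors are rank one, and a state assigning probability $1$ to two rank-one projectors forces those projectors to coincide, i.e.\ $C_0=C_1$. The symmetric conditions coming from the $B_1C_1$ and $A_1C_1$ pairings collapse $A_0=A_1$ and $B_0=B_1$ as well, and a correlation produced by a single effective measurement per site is trivially $T_2$-local, so it cannot violate (\ref{ineq}) --- a contradiction. Thus demanding that the bound be ``met with equality for every triple $(\eta_A,\eta_B,\eta_C)$'' is self-defeating; the obstacle you flagged as the principal difficulty is in fact insurmountable, not merely technical. (The paper's closing remark in its proof --- that the efficiency inequality must be strict, since otherwise the two measurements become compatible --- is precisely this phenomenon.)

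What the paper does instead, and what repairs your outline, is \emph{asymptotic} rather than exact saturation. It takes the W-class state $\ket{\Psi_{ABC}(\theta)}=k_{\theta}[(\ket{011}+\ket{101}+\ket{110})+\tfrac{1-3\cos\theta}{\sin\theta}\ket{111}]$ (not a generalized GHZ state), with $X_0$ the computational basis and $X_1$ a $\theta$-tilted basis at every site. Because the state has no support on strings containing two or more $0$'s, the three negative triple terms vanish identically; the four positive triple terms all equal $k_{\theta}^{2}\sin^{4}\theta$, while the pair marginals equal $k_{\theta}^{2}\sin^{4}\theta\,(1+\tan^{2}\tfrac{\theta}{2})$, strictly larger than the triple terms. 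One then gets $\mathcal{B}_{T_2}=k_{\theta}^{2}\sin^{4}\theta\,[\,4\eta_A\eta_B\eta_C-(1+\tan^{2}\tfrac{\theta}{2})(\eta_A\eta_B+\eta_B\eta_C+\eta_A\eta_C)]$, which is positive iff $\tan^{2}\tfrac{\theta}{2}<\frac{4\eta_A\eta_B\eta_C-\eta_A\eta_B-\eta_B\eta_C-\eta_A\eta_C}{\eta_A\eta_B+\eta_B\eta_C+\eta_A\eta_C}$. The Lemma~\ref{lem1} bound is approached only in the limit $\theta\to 0$, and the strictness of the hypothesis on the efficiencies is exactly what guarantees a nonempty range of admissible $\theta$ for every allowed triple. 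If you replace ``equality for all efficiencies'' in your plan by ``equality in the limit along the one-parameter family,'' and trade the GHZ ansatz for a state supported on $\{\ket{011},\ket{101},\ket{110},\ket{111}\}$, your argument becomes the paper's proof.
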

\noindent
By the single parameter quantum settings, we would like to mean a class of genuinely entangled states along with two incompatible dichotomic measurements for each individual party, both parameterized by a single parameter $\theta$.
\begin{proof}
   Consider the three-qubit state
{\small\begin{align}\label{state}
    \ket{\Psi_{ABC}(\theta)}=k_{\theta}[(\ket{011}+\ket{101}+\ket{110})+\frac{(1-3\cos\theta)}{\sin\theta}\ket{111}]
\end{align}}
where $k_{\theta}$ is a function of $\theta$, denoting the normalization of the state. Each party performs an identical set of incompatible measurements in the following bases,
{\small\begin{align*}
    &X_{0}\equiv\{\ket{0},\ket{1}\}\\
    &X_{1}\equiv\{\cos\theta\ket{0}+\sin\theta\ket{1},\sin\theta\ket{0}-\cos\theta\ket{1}\}
\end{align*}}
where $X\in\{A,B,C\}$. Such a measurement setting readily implies  
$$P(000|A_iB_jC_k)=0$$
where, at least two of $\{i,j,k\}$ equals to $0$. On the other hand when two of $\{i,j,k\}$ equals to $1$, probabilities reduce to,
\begin{align*}
P(000|A_iB_jC_k)=\eta_{A}\eta_{B}\eta_{C}k_{\theta}^{2}\sin^{4}{\theta},\end{align*}

 and the following marginals become
 \begin{align*}P(00|X_1Y_1)=\eta_{X}\eta_{Y}k_{\theta}^{2}\sin^{4}{\theta}(1+\tan^{2}{\frac{\theta}{2}}),
 \end{align*}
 where, $X\neq Y\in\{A,B,C\}$. 

\par
Replacing the above expressions for the imperfect detectors, the inequality (\ref{ineq}) reduces to the following
\begin{multline}
    k_{\theta}^{2}\sin^{4}{\theta}[4\eta_{A}\eta_{B}\eta_{C}-\eta_{A}\eta_{B}-\eta_{B}\eta_{C}-\eta_{A}\eta_{C}\\-\tan^{2}{\frac{\theta}{2}}(\eta_{A}\eta_{B}+\eta_{B}\eta_{C}+\eta_{A}\eta_{C})] \label{ineq2} \leq 0
\end{multline}
Therefore, to violate the above inequality, we must have,
\begin{equation*}
    \tan^{2}{\frac{\theta}{2}} < \frac{4\eta_{A}\eta_{B}\eta_{C}-\eta_{A}\eta_{B}-\eta_{B}\eta_{C}-\eta_{A}\eta_{C}}{\eta_{A}\eta_{B}+\eta_{B}\eta_{C}+\eta_{A}\eta_{C}}
\end{equation*}
Therefore, for every possible value of $\{\eta_A,\eta_B,\eta_C\}$, satisfying the condition $4\eta_{A}\eta_{B}\eta_{C}-\eta_{A}\eta_{B}-\eta_{B}\eta_{C}-\eta_{A}\eta_{C}>0$ it is possible to find a range of $\theta$ for which a three-way genuine nonlocality can be certified experimentally. Notably, the inequality on the detection efficiencies is strict, otherwise one may consider $\theta=n\pi,~n\in \mathbb{Z}$ and the measurements $X_0$ and $X_1$ will become compatible.
\end{proof}

From Lemma  (\ref{lem1}) and Lemma  (\ref{lem2}), we can immediately infer the subsequent theorem. 
\begin{theorem}\label{t1}
   The MDE's for $\mathcal{B}_{T_2}$ for each of the party satisfy the following relation
   \begin{align*}
       &4\Big(\eta^{\mathcal{B}_{T_{2}}}_{\min}\Big)_A \Big(\eta^{\mathcal{B}_{T_{2}}}_{\min}\Big)_B \Big(\eta^{\mathcal{B}_{T_{2}}}_{\min}\Big)_C- \Big(\eta^{\mathcal{B}_{T_{2}}}_{\min}\Big)_A \Big(\eta^{\mathcal{B}_{T_{2}}}_{\min}\Big)_B \\
       &- \Big(\eta^{\mathcal{B}_{T_{2}}}_{\min}\Big)_A \Big(\eta^{\mathcal{B}_{T_{2}}}_{\min}\Big)_C- \Big(\eta^{\mathcal{B}_{T_{2}}}_{\min}\Big)_B \Big(\eta^{\mathcal{B}_{T_{2}}}_{\min}\Big)_C=0.
    \end{align*}
\end{theorem}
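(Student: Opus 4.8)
The plan is to read the two preceding lemmas as the two halves of a single \emph{if and only if} characterization and then identify the infimum in the definition of the MDE with the boundary of the resulting feasible region. First I would introduce the shorthand $f(\eta_A,\eta_B,\eta_C):=4\eta_A\eta_B\eta_C-\eta_A\eta_B-\eta_A\eta_C-\eta_B\eta_C$ and record the logical content of the lemmas in these terms. Lemma~\ref{lem1} states that any state and measurement set violating $\mathcal{B}_{T_2}$ forces $f>0$; contrapositively, if $f\le 0$ then \emph{no} quantum setting can certify genuine three-way nonlocality. Lemma~\ref{lem2} supplies the converse: for every efficiency triple with $f>0$ the explicit single-parameter family of Eq.~\eqref{state} violates $\mathcal{B}_{T_2}$ for a suitable range of $\theta$. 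Concatenating the two implications yields the clean equivalence that a quantum violation of $\mathcal{B}_{T_2}$ is achievable \emph{precisely when} $f(\eta_A,\eta_B,\eta_C)>0$.

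Next I would translate this equivalence into a statement about the optimization in \eqref{eq1}. The set of feasible efficiency triples, i.e.\ those for which some state and measurement certify genuine nonlocality, is exactly the open region $\{(\eta_A,\eta_B,\eta_C)\in(0,1]^3 : f>0\}$. By definition the per-party MDE triple $\big((\eta^{\mathcal{B}_{T_2}}_{\min})_A,(\eta^{\mathcal{B}_{T_2}}_{\min})_B,(\eta^{\mathcal{B}_{T_2}}_{\min})_C\big)$ is the infimal feasible triple, hence a boundary point of this region. Since feasibility is governed by the \emph{strict} inequality $f>0$, the topological boundary of the feasible region inside $(0,1]^3$ is the surface $f=0$, and therefore the MDE triple must satisfy $f=0$, which is exactly the claimed identity.

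The step I expect to demand the most care is the passage from the strict inequality $f>0$ to the equality $f=0$ at the infimum, since a priori the infimum of a set defined by a strict inequality could fail to lie on its boundary. To close this I would argue on both sides. That no triple with $f<0$ is feasible is immediate from Lemma~\ref{lem1}, so the infimum cannot descend below the surface $f=0$. That the infimum is not attained at any interior point with $f>0$ follows by continuity: given any feasible triple with $f>0$, the partial derivatives $\partial f/\partial\eta_A=4\eta_B\eta_C-\eta_B-\eta_C$ (and its cyclic analogues) are strictly positive near the boundary, so one can decrease each efficiency slightly while keeping $f>0$, producing a strictly smaller feasible triple. Hence no interior point is infimal, and the infimum is approached exactly as $f\to 0^+$, placing the MDE triple on $f=0$.

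Finally I would remark that the strictness of the efficiency inequality is consistent with the degenerate-setting caveat already noted after Lemma~\ref{lem2}: on the surface $f=0$ the bound $\tan^2(\theta/2)<f/(\eta_A\eta_B+\eta_B\eta_C+\eta_A\eta_C)$ collapses to $\theta=n\pi$, at which the two measurement bases $X_0,X_1$ coincide and the setting ceases to be genuinely incompatible. This confirms that $f=0$ is genuinely the limiting case rather than an achievable one, so the theorem characterizes the MDE triple as the boundary of achievability, completing the argument.
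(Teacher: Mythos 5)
Your proposal is correct and takes essentially the same route as the paper: the paper obtains Theorem~\ref{t1} immediately by combining Lemma~\ref{lem1} (necessity of $4\eta_A\eta_B\eta_C-\eta_A\eta_B-\eta_A\eta_C-\eta_B\eta_C>0$) with Lemma~\ref{lem2} (sufficiency), concluding that the MDE triple lies on the boundary surface where this expression vanishes. Your extra continuity argument for why the infimum sits exactly on that surface simply makes explicit what the paper treats as immediate.
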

As a consequence of Theorem \ref{t1} the MDE for two different scenarios can be concluded. 
 For the case of symmetric detection efficiencies, i.e. $\eta_{A}=\eta_{B}=\eta_{C}=\eta$, the inequality (\ref{ineq}) can be violated if and only if $\eta>\eta^{\mathcal{B}_{T_{2}}}_{\min}=75\%$. On the other hand, when a hybrid entanglement between different quantum particles are observed, then the MDE for each of the particles may demand different experimental sophistication. In such a scenario, if two of the detectors work perfectly, i.e., say $\eta_{A}=\eta_{B}=100\%$, then
the inequality (\ref{ineq}) can be violated if and only if $\eta_{C}>\Big(\eta^{\mathcal{B}_{T_{2}}}_{\min}\Big)_C=50\%$. Therefore, perfect detectors are not imperative in this scenario and the genuine non-locality can be established even when detectors deviate significantly from being perfect.\par

{\it Background Noise.}--  While the above analysis is restricted to pure entangled states, in the practical experimental set-up the existence of background noise becomes unavoidable. The presence of noise in a preparation device is generally regarded as a junk state, prepared with a significantly low frequency. Without having any prior knowledge about the noise, a randomization over all possible junk states can be assumed as noise. In presence of such a \textit{white noise} the tripartite state takes the form 
\begin{align*}
    {\rho}_{ABC}=(1-p)\ketbra{{\Psi}(\theta)}{{\Psi}(\theta)}_{ABC}+\frac{p}{8}\mathbb{I},
\end{align*}
where, the state $\ket{\psi(\theta)}$ is same as in Eq. (\ref{state}). Naturally, the MDE $\eta_{\min}^{\mathcal{B}_{T2}}(\theta, p)$ (as introduced in Eq. (\ref{eq2})) becomes highly nonlinear function. However, we show that with a reasonable amount of detection efficiency $0.92\leq\eta^{\mathcal{B}_{T2}}\leq0.96$, one can expect tolerance even up to $1.6\%$ background noise (see Fig. \ref{fig1} for details).\par

\begin{figure*}[tbh!]
  \centering
  \subfigure[]{\includegraphics[width = 0.474\textwidth]{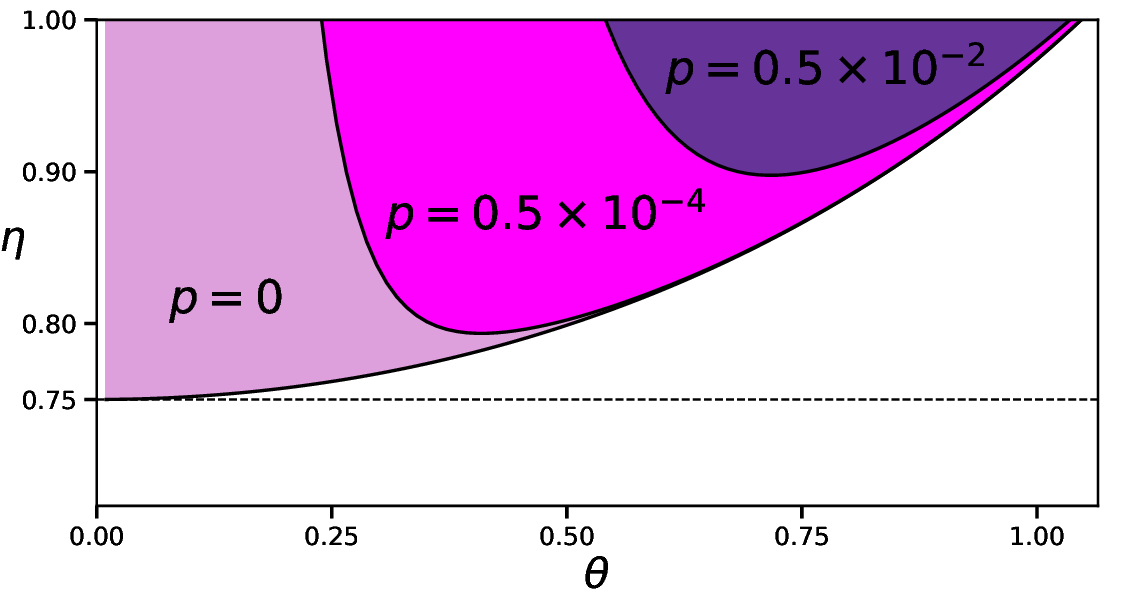}\label{fig1a}}\quad
  \subfigure[]{\includegraphics[width = 0.507\textwidth]{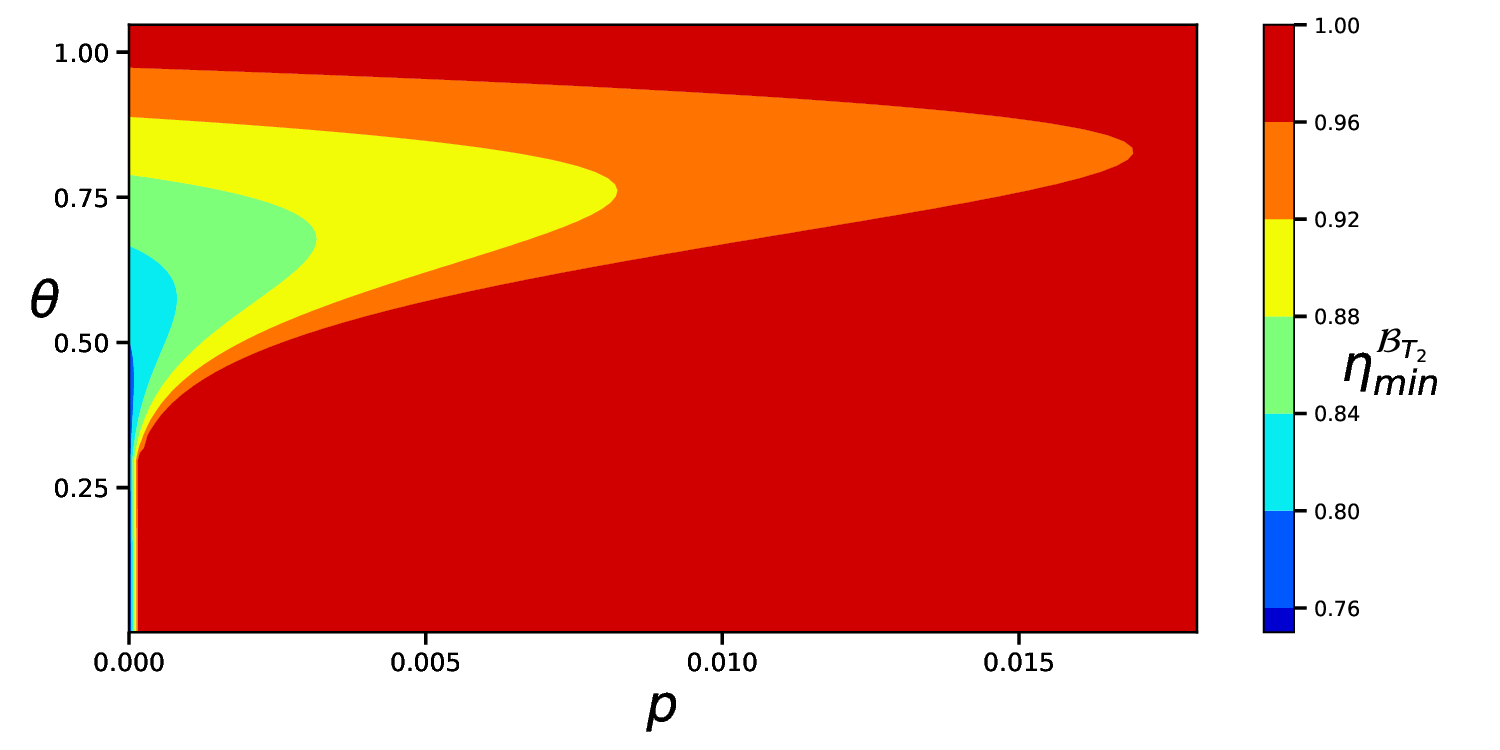}\label{fig1b}}
  \caption{\ref{fig1a} contains $\eta$ vs $\theta$ plot for different $p$. The solid black curve represents MDE, i.e, $\eta_{min}^{\mathcal{B}_{T_2}}$. The detector should have detection efficiency within the shaded(color for online) region in order to exhibit $T_2$ nonlocality. \ref{fig1b} contains three parameter plot of $p$ vs $\theta$ vs $\eta_{min}^{\mathcal{B}_{T_2}}$. It shows that $\mathcal{B}_{T_{2}}$ is fairly noise-tolerent(upto roughly $1.6\%$).}
    \label{fig1}
\end{figure*}

\textit{MDE for Svetlichny Nonlocality.}-- We will now consider a genuine nonlocality experiment with relaxed causal constraints on involved parties, i.e., by invoking no-time orderings among their local input-output generation. The violation of Svetlichny inequality (\ref{ineqSI}) serves as a prominent benchmark for illustrating any incompatibility with such a causal-structure and is identified as a Svetlichny-type genuinely nonlocal correlation. However, the experimental sophistication required to certify such a violation is pretty high -- almost 97\% of CDE for three-particle GHZ state \cite{Hong2012}. In the following, we will first derive an expression for CDE, depending on the statistics obtained by the individual parties. Thereafter, we numerically estimate the MDE considering any possible tripartite entangled preparation, which is significantly lower than the bound obtained in \cite{Hong2012}.


Let us first express the inequality (\ref{ineqSI}) in terms of the outcomes probabilities. Relabeling the outcomes of each measurement $\{\pm 1\}\mapsto\{0,1\}$, we can rewrite the inequality as,
\begin{align}\label{inSI2}
   &4+8[m_{010}+m_{101}-(m_{000}+m_{001}+m_{011}+m_{100}+m_{110}&&\nonumber\\
   &+m_{111})]+8[q_{00}+q_{11}+r_{01}+r_{10}+s_{00}+s_{11}]&&\nonumber\\
   &-4[a_{0}+a_{1}+b_{0}+b_{1}+c_{0}+c_{1}]\leq4&&
\end{align}
\begin{align*}
&\text{where, } m_{ijk}=P(000|A_iB_jC_k),q_{ij}=P(00|A_iB_j),&&\\
&r_{jk}=P(00|B_jC_k), s_{ik}=P(00|A_iC_k),a_i=P(0|A_i),&&\\
&b_j=P(0|B_j), c_{k}=P(0|C_k),~\forall i,j,k \in \{0,1\}.
\end{align*}

\begin{theorem}\label{t2}
     For the Svetlichny-type nonlocality test, with symmetric detectors, the CDE becomes $\eta^{\text{S}}=\frac{\sqrt{\beta^2+4|\alpha|\gamma}-\mathbf{sgn}(\alpha)\times\beta}{2|\alpha|}$, where
     {\small\begin{align*}
         &\beta:=2\sum_{i,j,k}[\overline{(i\oplus j)}q_{ij}+(j\oplus k)r_{ij}+\overline{(i\oplus k)}s_{ik}],\\
         &\alpha:=2\sum_{i,j,k}(-1)^{(\bar{i}.\bar{j}+j.k+i.\bar{k})}m_{ijk},~\gamma:=\sum_{i,j,k}(a_i+b_j+c_k)\\
         & \text{ and } \mathbf{sgn}(\alpha)=\pm 1,\text{ for }\alpha>0 \text{ and }\alpha<0 \text{ respectively}.
     \end{align*}}
\end{theorem}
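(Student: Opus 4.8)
The plan is to reduce the violation of the relabeled Svetlichny inequality \eqref{inSI2} to a scalar quadratic inequality in the single efficiency parameter $\eta$, and then to read off the threshold as the relevant root of the associated quadratic.

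First I would relate the observed probabilities entering \eqref{inSI2} to the ideal (detector-perfect) ones. Using the no-click convention of Lemma~\ref{lem1} (assign outcome $1$ to every no-click event), an observed all-zero joint probability acquires one factor of $\eta$ for each party constrained to output $0$. Hence, with symmetric detectors $\eta_A=\eta_B=\eta_C=\eta$, every three-party term $m_{ijk}=P(000|\cdot)$ scales as $\eta^{3}m'_{ijk}$, every two-party term $q_{ij},r_{jk},s_{ik}$ scales as $\eta^{2}$ times its ideal value, and every single-party marginal $a_i,b_j,c_k$ scales as $\eta$ times its ideal value. Crucially, all surviving terms in \eqref{inSI2} are of this all-zero type, so no no-click outcome leaks into the expression and the pull-back to ideal probabilities is exact.

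Next I would substitute these scalings into the violation condition $8[\cdots m\cdots]+8[\cdots q,r,s\cdots]-4[\cdots a,b,c\cdots]>0$ extracted from \eqref{inSI2}, and factor out the common positive factor $4\eta$. This leaves
$$\alpha\,\eta^{2}+\beta\,\eta-\gamma>0,$$
with $\alpha,\beta,\gamma$ exactly the ideal-probability combinations stated in the theorem: the factor $(-1)^{\bar i\bar j+jk+i\bar k}$ reproduces the sign pattern $+m_{010}+m_{101}-(\text{six remaining terms})$ of \eqref{inSI2}, while $\overline{(i\oplus j)}$, $(j\oplus k)$, $\overline{(i\oplus k)}$ select precisely the pairs $q_{00},q_{11}$; $r_{01},r_{10}$; $s_{00},s_{11}$, and $\gamma$ collects the single-party terms. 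Here I would record that $\beta\ge 0$ and $\gamma\ge 0$, both being sums of probabilities.

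Finally I would solve the boundary equation $\alpha\eta^{2}+\beta\eta-\gamma=0$. The CDE is the smallest $\eta\in(0,1]$ for which the quadratic is positive, i.e. the relevant root of this equation. Writing that root in a sign-robust form using $|\alpha|$ and $\mathbf{sgn}(\alpha)$ gives $\eta^{S}=\dfrac{\sqrt{\beta^{2}+4|\alpha|\gamma}-\mathbf{sgn}(\alpha)\,\beta}{2|\alpha|}$, which for $\alpha>0$ is exactly the positive root $\frac{-\beta+\sqrt{\beta^{2}+4\alpha\gamma}}{2\alpha}$ (the negative root being unphysical, since the product of the roots is $-\gamma/\alpha\le 0$). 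The step I expect to be most delicate is this root selection: one must argue that $\eta^{S}$ is genuinely the boundary of the violation region and lies in $(0,1]$, and must track the sign of $\alpha$ (which depends on the chosen state and measurements) so that the absolute-value/sign packaging returns the physical threshold rather than the spurious branch.
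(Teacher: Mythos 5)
Your proposal follows essentially the same route as the paper's proof: the observed probabilities are pulled back to ideal ones via the $\eta^{3},\eta^{2},\eta$ scalings under the no-click-to-outcome-$1$ convention, the violation condition is reduced (after factoring out the positive common factor in $\eta$) to the quadratic condition $\alpha\eta^{2}+\beta\eta-\gamma>0$ with exactly the paper's $\alpha,\beta,\gamma$, and the CDE is read off as the relevant root in the sign-robust form. The only minor difference is in justifying the root selection, where you use the product of roots $-\gamma/\alpha\le 0$ while the paper invokes the violation at $\eta=1$ (perfect detectors) together with $\beta\ge 0$, $\gamma\ge 0$; both serve the same purpose.
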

\begin{proof}
    The Svetlichny's inequality, in the form (\ref{inSI2}), can be violated with a properly chosen quantum setting, whenever
    \begin{align*}
   &8[m_{010}+m_{101}-(m_{000}+m_{001}+m_{011}+m_{100}+m_{110}&&\nonumber\\
   &+m_{111})]+8[q_{00}+q_{11}+r_{01}+r_{10}+s_{00}+s_{11}]&&\nonumber\\
   &-4[a_{0}+a_{1}+b_{0}+b_{1}+c_{0}+c_{1}]>0&&
    \end{align*}
Now, consider the situation where all three imperfect detectors are of the same detection efficiency $\eta$. Then replacing all the theoretical probabilities with the observed probabilities, we can rewrite the above inequality as,
{
\small
\begin{align*}
&\sum_{i,j,k~\in\{0,1\}}\bigg[(-1)^{(\bar{i}.\bar{j}+j.k+i.\bar{k})}2\eta^3m_{ijk}+2\eta^2\Big\{\overline{(i\oplus j)}q_{ij}&&\nonumber\\
&({j\oplus k})r_{jk}+\overline{(i\oplus k)}s_{ik}\Big\}-\eta\Big(a_i+b_j+c_k\Big)\bigg]>0&&
\end{align*}
}
The detection efficiency being always positive, above inequality simplifies to,
\begin{equation*}
    \alpha \eta^2+\beta \eta-\gamma>0
\end{equation*}
where, $\alpha,~\beta\text{ and }\gamma$ are same as defined earlier.\par
Note that, for any quantum settings that exhibit Svetlichny-type nonlocality, must satisfy the above inequality at least for $\eta=1$, i.e., for the perfect detector. Hence $\alpha+\beta+\gamma>0$. This, together with the fact that $\beta\geq0$ and $\gamma\geq0$, provides $\eta=\frac{\sqrt{\beta^2+4|\alpha|\gamma}-\mathbf{sgn}(\alpha)\times\beta}{2|\alpha|}$ saturating the above inequality. Note that, the $\mathbf{sgn}(\alpha)$ function incorporates that $\alpha$ can be both positive or, negative. Finally, noting that all the parameters $\alpha,\beta,\gamma$ are the parameters generated from the quantum settings, i.e., state and measurements, we can identify $\eta$ as $\eta^{\text{S}}$ -- the CDE for Svetlichny-type nonlocality test.
\end{proof}
Optimizing over all possible quantum settings numerically, it can be shown that the $\eta_{\text{min}}^{S}\simeq 88.1\%$ is sufficient for demonstrating Svetlichny-type nonlocality.

{\it Discussion--} In summary, we have estimated the minimum detection efficiency required for loophole free test of various forms of genuine nonlocality exhibited by multipartite quantum systems. In particular, taking the operational framework into account we have investigated both the time-ordered ($T_2$) genuine nonlocality, as well as the traditional Svetlichny-type genuine nonlocality tests. To this goal, we have primarily introduced the notion of cut-off detection efficiency which explicitly depends upon the quantum settings used to demonstrate a particular Bell nonlocality. Then systematically we have reached to the minimum detection efficiency for the concerned nonlocality experiment, optimizing over all possible quantum states and measurement settings. Within this framework, our study reveals that, 
to demonstrate loophole-free $T_{2}$ nonlocality in symmetric cases, a detection efficiency of $75\%$ at each site is both necessary and sufficient. This requirement can be relaxed to $50\%$ when any two out of the three parties possess perfect detection efficiency. 
Interestingly, our results suggest that to achieve the MDE for violating $T_{2}$ inequality is more feasible in the scenarios where the correlations nearly breach the inequality, instead of demonstrating the maximal violations. Further, we have shown that in presence of a significantly low white-noise, the range of quantum settings showing $T_2$ nonlocality sharply decreases. Nevertheless, for the given inefficient detectors and a permissible noise limit, our analysis characterizes the range of quantum settings viable to exhibit genuine nonlocality in experiments. On the other hand, we have also established that a detection efficiency of $88.1\%$ is deemed sufficient for demonstrating Svetlichny nonlocality, which is well below the previously estimated one.\par %
Though Svetlichny inequality and $T_{2}$ inequality, both are distinct facets of the set of the $T_{2}$ correlations, the set of all $T_2$ local correlations is a strict subset of that of the Svetlichny-local correlations. This, in turn, admits that the correlations violating Svetlichny inequality are stronger than those violating the $T_2$ inequality and hence demand more sophistication in the involved experimental set-up. Thus, as a consequence of our results, the investigation and comparison of MDE across distinct nonlocal classes has emerged as a compelling avenue for future research. \\
\\

\textit{Acknowledgements.}--SRC acknowledges support from University Grants Commission, India (reference no. 211610113404). RA acknowledges funding and support from ISI DCSW Project No. PU/506/PL-MISC/521. We acknowledge fruitful discussions with Guruprasad Kar, Manik Banik, Asutosh Rai and Amit Mukherjee.

\end{document}